\setlist[enumerate]{leftmargin=.5in}
\setlist[itemize]{leftmargin=.5in}
\newcommand{\N}{\mathbb{N}}
\newcommand{\Z}{\mathbb{Z}}
\newtheorem{theorem}{Theorem}
\newtheorem{lemma}{Lemma}
\newtheorem{proposition}{Proposition}
\theoremstyle{definition}
\newtheorem{definition}{Definition}
\theoremstyle{remark}
\newtheorem{remark}{Remark}
\author{Simran Tinani}
\author{Joachim Rosenthal}
\affil{Institute of Mathematics, University of Zurich, Switzerland}
\title{A Deterministic Algorithm for \\ the Discrete Logarithm Problem
in a Semigroup}
\begin{document}
\maketitle
  \begin{abstract} The discrete logarithm problem in a finite group is the basis for many protocols in cryptography. The best general algorithms which solve this problem have time complexity of $\mathcal{O}(\sqrt{N}\log N)$, and a space complexity of $\mathcal{O}(\sqrt{N})$ where $N$ is the order of the group. (If $N$ is unknown, a simple modification would achieve a time complexity of $\mathcal{O}(\sqrt{N}(\log N)^2)$.) These algorithms require the inversion of some group elements or rely on finding collisions and the existence of inverses, and thus do not adapt to work in the general semigroup setting. For semigroups, probabilistic algorithms with similar time complexity have been proposed. The main result of this paper is a deterministic algorithm for solving the discrete logarithm problem in a semigroup. Specifically, let $x$ be an element in a semigroup having finite order $N_x$. The paper provides an algorithm, which, given any element $y\in \langle x \rangle $, provides all natural numbers $m$ with $x^m=y$, and has time complexity $O(\sqrt{N_x}(\log N_x)^2)$ steps. The paper also gives an analysis of the success rates of the existing probabilistic algorithms, which were so far only conjectured or stated loosely. \end{abstract}

\section{Introduction}
\label{intro}
Let $G$ be a group and assume $x,y\in G$ are two elements of the group. We refer to $x$ as the base element. The discrete
logarithm problem (referred to henceforth as DLP) asks for the computation 
of an integer $m\in \Z$ (assuming such integers exist) such that $x^m=y$.
The DLP plays an important role in a multitude of algebraic and number theoretic cryptographic systems. Its use was introduced in the Diffie-Hellman protocol for public key exchange \cite{di76} and has since seen a tremendous amount of development, generalisations and extensions \cite{me97}. Many modern-day systems for public key exchange use the discrete logarithm problem in a suitable group. The most commonly used groups have been the multiplicative group of finite fields and the group of points on an elliptic curve. The DLP in Jacobians of hyperelliptic curves 
and more general abelian varieties has also been studied extensively~\cite{co06}.

In this paper, we compute complexities using group multiplications as one fundamental step. Thus, an exponentiation $x^e$ is performed in $\mathcal{O}(\log e)$ steps. We will use the fact that for two lists of length $n$ in which a match exists, a match can be found in $\mathcal{O}(n\log n)$ steps using standard sorting and searching algorithms (for details, the interested reader may refer to \cite{cormen2009introduction}). For a general finite group of order $N$, there exist algorithms that solve the DLP in $\mathcal{O}(\sqrt{N}\log N)$ steps. Such algorithms are said to produce a square root attack. The most well-known examples are Shank's Baby Step-Giant Step algorithm \cite{shanks1971class} and the Pollard-Rho algorithm \cite{pollard1978monte}. 
Note that Shank's algorithm is a deterministic algorithm having time complexity
$\mathcal{O}(\sqrt{N}\log N)$ space complexity $\mathcal{O}(\sqrt{N})$.
In contrast, Pollard's algorithm is a probabilistic algorithm having time complexity
$\mathcal{O}(\sqrt{N}\log N)$ group multiplications and space complexity $\mathcal{O}(1)$. If $N$ is unknown, a simple modification of these algorithms would achieve a time complexity of $\mathcal{O}(\sqrt{N}(\log N)^2)$.

Elliptic curve groups have been widely implemented in practice since for a carefully selected elliptic curve group the best known classical algorithm for solving DLP has running time $\mathcal{O}(\sqrt{N}\log N)$, 
where $N$ is the group order. This is in contrast to many other finite groups such as the multiplicative
group of a finite field and the group of invertible matrices over a finite field where algorithms with  subexponential running time are known \cite{ad93}.

In cryptography the Diffie-Hellman protocol using a finite group has been generalized 
to situations where the underlying problem is a discrete logarithm problem 
in a semigroup or even to situations where a semigroup acts on a set~\cite{ka13,ma07}. 
The interested reader will find more material in a recent survey by Goel et al.~\cite{go20}.

It is naturally interesting to ask whether the DLP also has a square root attack in more generalized structures such as semigroups. Here, we define a semigroup as any set of elements with an associative binary operation. Since the best algorithms for the DLP all make use of the existence of inverses, it is unclear whether they can be generalized to a semigroup. However, when a special type of semigroup element, called a torsion element, is used as the base, it turns out that the DLP is reducible in polynomial time to the DLP in a finite group. A torsion element is one whose powers eventually repeat to form a cycle, and will be defined more precisely in Section \ref{basics}. This section also elaborates more on why the standard collision-based algorithms are not directly adaptable to the semigroup case. A semigroup in which every element is torsion is called a torsion semigroup.

The DLP in semigroups with a torsion base element, in a classical setting, was first discussed by Chris Monico \cite{monico2002} in 2002, and later in a paper by Banin and Tsaban \cite{banin} in 2016. While the discussion in the present paper is entirely on classical algorithms, it is also worth mentioning the paper \cite{quantum}, where the authors independently provide a quantum algorithm that solves the DLP in a torsion semigroup.

Both the algorithm of Monico and the one of Banin and Tsaban are probabilistic and
might fail with low probability. Further, some of their methods are heuristic, dependent on an oracle or some additional assumption, and their success rates and expected number of steps are either conjectured or stated loosely. It is therefore of interest to come up with 
an algorithm which deterministically computes the discrete logarithm in a semigroup.
In this regard we like to make some analogy to the problem of determining if an integer 
is a prime number, a problem of great importance in cryptography. Nowadays in practice 
the algorithm of Miller and Rabin~\cite{mi76,ra80} has been implemented for many years.
Still it was a great result when Agrawal, Kayal and Saxena~\cite{ag04} came up
with a deterministic polynomial time algorithm to achieve this goal. 

A key step in finding the discrete logarithm in a semigroup is computing the cycle length of an element. Both the algorithms of \cite{banin} and \cite{monico2002} rely on computing some multiple of the cycle length, and then removing ``extra" factors by taking gcd's until the cycle length is obtained. Once the cycle length value is obtained, the discrete logarithm may easily be computed with a few more simple steps. While Monico does not provide further elaboration on how this is done, the paper by Banin and Tsaban bridges this knowledge gap by showing how the problem is reduced to a DLP in a group once the cycle length and start values are known. Denote by $N_x$ the order of $x$ (formally defined in Definition~\ref{ord_elt}). The complexity of the algorithm in \cite{banin} is $\mathcal{O}(\sqrt{N_x}(\log N_x)^2\log\log N_x)$, and the of the one in \cite{monico2002} is $\mathcal{O}(\sqrt{N_x}(\log N_x)^2)$. While both of the existing methods seem to succeed with high probability for practical values, we show that the process of taking successive gcd's/factors is unnecessary, and that one can deterministically find the cycle length. The main contribution of this paper will be a deterministic algorithm for computing the discrete logarithm of an element $y$ in some semigroup $S$ with respect to some torsion base element $x\in S$. The time complexity of our algorithm is $\mathcal{O}(\sqrt{N_x}(\log N_x)^2)$.

The paper is structured as follows: After providing preliminaries and basic 
definitions in Section \ref{basics}, we will analyse 
in Section \ref{probalgo} the success rates and expected number of steps involved in the probabilistic algorithms for cycle length by Banin and Tsaban (Algorithm \ref{banin_algo}) and Monico (Algorithm \ref{monico_algo}).

In Section \ref{det}, which is the main section of this paper, we provide a deterministic algorithm to calculate the cycle length $L_x$ of a torsion element $x$ of a semigroup and thus to also solve the DLP, without the use of an oracle. This algorithm has complexity $\mathcal{O}\left( \sqrt{N_x}\cdot (\log N_x)^2\ \right).$ For completeness, we will also demonstrate the use of Pohlig--Hellman algorithm \cite{pohlig1978improved} for a semigroup. 

\section{Preliminaries}\label{basics}

A semigroup $S$ is a set together with an associative binary operation.
Like in group theory where a torsion group consists of elements of finite order only we define:

 \begin{definition}[Torsion Element] Let $S$ be a semigroup. An element $x \in S$ is called a torsion element if the sub-semigroup $\langle x \rangle:=\{x^k\mid k\in \N \}$ generated by $x$, is finite. $S$ is called a torsion semigroup if every $x\in S$ is a torsion element.
 
 \end{definition}

 Throughout the paper the following definitions will be assumed:
 
  \begin{definition}[Cycle Start] Let $x \in S$. The cycle start $s_x$ of $x$ is defined as the smallest positive integer such that $x^{s_x} = x^{b} $ for some $b\in \N$, $b>s_x$.
\end{definition}

 \begin{definition}[Cycle Length] Let $x \in S$. The cycle length $L_x$ of $x$ is defined as the smallest positive integer such that $x^{s_x + L_x} = x^{s_x}$.
\end{definition}

 \begin{definition}[Element order]\label{ord_elt} Let $x \in S$. With notation as above, we define the order 
 $N_x$ of $x$ as the cardinality of the sub-semigroup $\langle x \rangle$. Note that 
 $N_x=s_x+L_x-1.$
\end{definition}

 \begin{definition}[Semigroup DLP] Let $S$ be a semigroup and $x \in S$. The semigroup DLP is defined as follows. Given $y\in \langle x \rangle:=\{x^k\mid k\in \N \}$, find all $m\in N$ such that $x^m=y$.
 \end{definition} 
 
 We state below a key result first proved in \cite{banin}.

\begin{lemma}[\cite{banin}]\label{group_within} Let $S$ be a semigroup and $x \in S$ be an element with cycle start $s_x$. The set of powers \ $G_x=\{ x^{s_x+k} \mid k\geq 0\}$ \ of $x$ forms a finite cyclic group. The identity element of $G_x$ is given by $x^{tL_x}$, where $t$ is the minimum positive integer such that $x^{tL_x} \in G_x$.
  \end{lemma}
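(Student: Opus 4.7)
The plan is to verify the group axioms for $G_x$ one at a time directly from the defining properties of $s_x$ and $L_x$, and then to exhibit cyclicity via an explicit isomorphism with $\mathbb{Z}/L_x\mathbb{Z}$. The single algebraic fact I would use repeatedly is a \emph{periodicity relation}: from $x^{s_x + L_x} = x^{s_x}$ together with associativity, an easy induction gives $x^{s_x + a + nL_x} = x^{s_x + a}$ for every $a \geq 0$ and $n \geq 0$. By minimality of $L_x$ the elements $x^{s_x}, x^{s_x + 1}, \ldots, x^{s_x + L_x - 1}$ must be pairwise distinct, so $|G_x| = L_x$, and closure is immediate from $x^{s_x + k_1} \cdot x^{s_x + k_2} = x^{2s_x + k_1 + k_2} \in G_x$.

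For the identity, let $t$ be the least positive integer with $tL_x \geq s_x$, so that $x^{tL_x} \in G_x$. For an arbitrary $x^{s_x + k} \in G_x$, the periodicity relation applied with $a = k$ and $n = t$ gives $x^{tL_x} \cdot x^{s_x + k} = x^{s_x + k + tL_x} = x^{s_x + k}$, showing $x^{tL_x}$ acts as a two-sided identity. For inverses, given $x^{s_x + k}$, I would solve $x^{2s_x + k + j} = x^{tL_x}$; since both exponents are already in the cycle regime, this equation reduces to the congruence $2s_x + k + j \equiv 0 \pmod{L_x}$, which has a unique solution $j \in \{0, \ldots, L_x - 1\}$.

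For cyclicity, I would define $\phi: \mathbb{Z}/L_x\mathbb{Z} \to G_x$ by $\phi(k) = x^{tL_x + k}$. Bijectivity is clear, since the $L_x$ exponents $tL_x, tL_x + 1, \ldots, tL_x + L_x - 1$ all lie in the cycle regime and are pairwise inequivalent modulo $L_x$, so they correspond to $L_x$ distinct elements of the $L_x$-element set $G_x$. The homomorphism property follows from $\phi(k_1)\phi(k_2) = x^{2tL_x + k_1 + k_2}$, which by the periodicity relation (applicable because $2tL_x + k_1 + k_2 \geq s_x$) equals $x^{tL_x + (k_1 + k_2 \bmod L_x)} = \phi(k_1 + k_2)$. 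Hence $G_x \cong \mathbb{Z}/L_x\mathbb{Z}$ is cyclic.

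The only real delicacy is that the periodicity relation may only be invoked when the exponents involved are at least $s_x$; in a semigroup one cannot cancel to extend it backward into the ``tail'' of the orbit. This is precisely what forces the introduction of $t$, rather than allowing us to take $x^{L_x}$ itself as the identity: unless $L_x \geq s_x$, the element $x^{L_x}$ need not even belong to $G_x$, let alone satisfy $x^{L_x} \cdot x^{s_x + k} = x^{s_x + k}$.
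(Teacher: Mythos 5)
Your argument is correct, but note that the paper itself offers no proof of this lemma to compare against: it is quoted from Banin--Tsaban with the remark ``We state below a key result first proved in \cite{banin}.'' Your verification of the group axioms via the periodicity relation $x^{s_x+a+nL_x}=x^{s_x+a}$ is sound, and the isomorphism $\phi(k)=x^{tL_x+k}$ with $\mathbb{Z}/L_x\mathbb{Z}$ is exactly consistent with the paper's later use of $x'=x^{tL_x+1}$ as a generator of $G_x$. Two details are glossed and deserve a line each. First, the pairwise distinctness of $x^{s_x},\ldots,x^{s_x+L_x-1}$ does not follow from minimality of $L_x$ by mere assertion: since one cannot cancel in a semigroup, you should argue that $x^{s_x+i}=x^{s_x+j}$ with $0\leq i<j\leq L_x-1$ forces, after right-multiplication by $x^{L_x-j}$, the relation $x^{s_x+(L_x-(j-i))}=x^{s_x+L_x}=x^{s_x}$ with $0<L_x-(j-i)<L_x$, contradicting minimality. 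Second, you define $t$ as the least positive integer with $tL_x\geq s_x$, whereas the lemma characterizes it as the least $t$ with $x^{tL_x}\in G_x$; these coincide, but only because the minimality of $s_x$ as cycle start rules out $x^{tL_x}$ coinciding with a higher power when $tL_x<s_x$, and that equivalence should be stated. Neither point affects the validity of the proof.
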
 

The following result is stated in \cite{monico2002} in a slightly different formulation. We provide an equivalent proof based on the group structure of $G_x$.
 
 \begin{lemma}[\cite{monico2002}]
 Let $x \in S$ have cycle start $s_x$ and cycle length $L_x$. For all integers $n, m \geq s_x$, we have $x^m = x^n \iff n\equiv m \mod L_x$. 
  \end{lemma}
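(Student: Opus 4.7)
The plan is to leverage the cyclic group structure of $G_x$ from Lemma \ref{group_within} together with the minimality defining $L_x$, and to reduce the claim to showing that the map $\phi \colon \{0, 1, \ldots, L_x - 1\} \to G_x$ given by $\phi(i) = x^{s_x + i}$ is a bijection. Once bijectivity is in hand, for any $m, n \geq s_x$ I would write $m = s_x + a$ and $n = s_x + b$ with $a, b \geq 0$, observe $x^m = \phi(a \bmod L_x)$ and $x^n = \phi(b \bmod L_x)$, and conclude $x^m = x^n \iff a \equiv b \pmod{L_x} \iff m \equiv n \pmod{L_x}$.

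For surjectivity and simultaneously the easy ``$\Leftarrow$'' implication, I would first prove by induction on $k$ that $x^{n + kL_x} = x^n$ for all $n \geq s_x$ and all $k \geq 0$. The step case reduces to $k = 1$: right-multiplying the defining identity $x^{s_x + L_x} = x^{s_x}$ by $x^{n - s_x}$ (legal because $n \geq s_x$) yields $x^{n + L_x} = x^n$, and iteration does the rest. Surjectivity of $\phi$ then follows since every element $x^{s_x + k} \in G_x$ equals $x^{s_x + (k \bmod L_x)} = \phi(k \bmod L_x)$.

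For injectivity, I would argue by contradiction: suppose $0 \leq i < j < L_x$ and $x^{s_x + i} = x^{s_x + j}$. Right-multiplying both sides by $x^{L_x - j}$ produces $x^{s_x + L_x - (j - i)} = x^{s_x + L_x} = x^{s_x}$, and since $0 < L_x - (j - i) < L_x$, this contradicts the minimality of $L_x$ in its definition. The principal obstacle throughout is the absence of inverses in a general semigroup, which forbids naive cancellation such as deducing $x^r = e$ from $x^{n+r} = x^n$; I sidestep this by never cancelling in $S$ itself and instead always multiplying both sides of an identity by a fixed power of $x$, invoking the minimality of $L_x$ to close out.
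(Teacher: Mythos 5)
Your proof is correct, but it takes a genuinely different route from the paper's. The paper leans on Lemma~\ref{group_within}: it writes $n = m + kL_x + u$ with $0 \le u < L_x$, shifts the exponents by multiples of $L_x$ until the relevant power lies in the group $G_x$, and then cancels by multiplying with an inverse inside $G_x$ to reduce to $x^{s_x} = x^{s_x+u}$, whence $u=0$ by minimality of $L_x$. You instead avoid inverses altogether: your injectivity step replaces cancellation by right-multiplication with the positive power $x^{L_x - j}$, which ``completes the cycle'' on one side and lands on $x^{s_x + (L_x - (j-i))} = x^{s_x}$ with $0 < L_x - (j-i) < L_x$, contradicting the minimality of $L_x$ directly. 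This makes your argument self-contained --- it never invokes the group structure of $G_x$ or formula~\eqref{inverseformula}, and as a byproduct it re-establishes that $\lvert G_x\rvert = L_x$ via the bijection $\phi$. The paper's version is shorter once Lemma~\ref{group_within} is granted, but yours is arguably better suited to the paper's own narrative, since it illustrates concretely the point made in Remark~1 that one can work around the absence of inverses by multiplying up to a full cycle rather than cancelling. All the individual steps check out: the induction giving $x^{n+kL_x}=x^n$ for $n \ge s_x$ is sound, the reduction $x^{s_x+a} = \phi(a \bmod L_x)$ uses only that $s_x + (a \bmod L_x) \ge s_x$, and the range bound $0 < L_x - (j-i) < L_x$ follows from $0 \le i < j < L_x$.
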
 
\begin{proof}
We can assume without loss of generality that $n \geq m$, and so we can write $n = m + kL_x+ u $, with $k \geq 0$ and $0\leq u <L_x$.
First suppose that $n\equiv m \mod L_x$, i.e. $u = 0$. Since $m, n \geq s_x$, we have $x^n = x^{m+kL_x} = x^m$. 

Conversely, if $x^n = x^m$,
write $n_1 = n-s_x \geq 0$, and $m_1 = m-s_x \geq 0$. We have \[x^{s_x + m_1}= x^{s_x+n_1} = x^{s_x+m_1 + kL_x + u} = x^{s_x+m_1+u}. \]
Now, without loss of generality, $m_1 \geq s_x$, because if not, one can always increment $m_1$ and $n_1$ by multiples of $L_x$ until this happens. So, we can assume that $x^{m_1}$ lies in $G_x$ and is thus invertible. We multiply by the inverse on both sides to finally get \[x^{s_x} = x^{s_x+u}.\] 

Thus, we must have $u=0$ or $n \equiv m \mod L_x$, as required. 

\end{proof}

\begin{remark}
It becomes clear from the above discussion that the standard collision-based algorithms for order and discrete log computations in a group do not adapt directly to a general semigroup. Collision-based algorithms for the computation of the order $N$ of a group element $x$ (for instance, see \cite{Sutherland2007OrderCI}) are based on the principle that whenever $N$ can be expressed in the form $N = A-B$ for non-negative integers $A$ and $B$, the collision $x^A = x^B$ always occurs. However, this principle does not work in a semigroup, where there are two independent components of the order. More specifically, for a semigroup element $x$ with cycle length $L_x$ and cycle start $s_x$, whenever $L_x$ may be expressed in the form $A-B$ for non-negative integers $A$ and $B$, the equality $x^A = x^B$ holds if and only if $A, B\geq s_x$. As an example, consider a semigroup element $x$ with cycle length $L_x = 12$ and cycle start $s_x=5$. Then, $L_x= 15-3$, but $x^{15} \neq x^3 $. Thus without prior knowledge of the cycle start, the semigroup order $N_x$ or cycle length $L_x$ cannot directly be found using the same collision-based algorithms for groups.
 
 Similarly, collision-based algorithms fail for discrete log computations in a semigroup. As an example, consider a semigroup element $x$ with cycle length $L_x = 15$ and cycle start $s_x = 10$, and suppose that the discrete log of $y=x^5$ is to be found. Then $y\cdot x^6 = x^{11} = x^{26}$ is obtained as a collision. However, unlike in the group case, the conclusion $y = x^{26-6}= x^{20}$ is wrong since $x^5 \neq x^{20}$. This happens because even though $x$ is torsion and forms a cycle of powers, it is not invertible. 
\end{remark}

This concludes the prerequisite knowledge on torsion elements in semigroups. In the next section, we study the existing probabalistic algorithms for cycle lengths, and analyse their assumptions, working and complexities.

\section{Existing Probabalistic Agorithms}\label{probalgo}

\subsection{Banin and Tsaban's Algorithm}\label{banin}

In this section, we study the probabalistic algorithm described in \cite{banin} for computing the cycle length of a torsion element in a semigroup. While the authors of the original paper describe their theory only for torsion semigroups, it will become clear that the same discussion holds true for any semigroup when the base element chosen is torsion.

Let $S$ be a semigroup and $x$ be a torsion element of $S$. Let $s_x$ denote the cycle start of $x$ and $L_x$ its cycle length. Then, recall from Lemma \ref{group_within} that $G_x:=\{x^{s_x}, x^{s_x+1}, \ldots, x^{s_x+L_x-1}\}$ is a cyclic group, and that it has order $L_x$. The authors of \cite{banin} assume the availability of a `Discrete Logarithm Oracle' for the group $G_x$, which returns values $\log_x h$ for $h \in G_x$. They state that these values need not be smaller than the group order but are polynomial in the size of $G_x$ and the element $x$. The representation of the identity in $G_x$ is unknown, and a method to compute inverses is not available.

The authors claim that the well-known algorithms for discrete logarithm computations in groups do not explicitly require inverses, or can easily be modified to work without the use of inverses. While it is true that these algorithms make use of mainly the existence of inverses rather than their explicit computation, we believe that the fact that easy modification is possible is not immediate without some justification. In fact, it will become clear in the later sections that the modified Baby-Step-Giant-Step algorithm devised by Monico \cite{monico2002} (and also the deterministic algorithm presented in Section \ref{det}) is a crucial and non-trivial part of any such modification.

We make the following observation from the proof of Lemma 1 found in \cite{banin}. For any $k\geq 0$, denote by $v_k$ the smallest positive integer such that 
 $$v_k L_x \geq 2s_x+k.$$
 We then have $x^{v_k L_x-s_x-k} \in G_x$ and \begin{equation}\label{inverseformula} x^{s_x + k}x^{v_k L_x-s_x-k} = x^{v_k L_x} = x^{tL_x},\end{equation} so the inverse of the element $x^{s_x + k}$ of $G_x$ is given by $x^{v_k(L_x)-s_x-k}$. In particular, the computation of inverses requires prior knowledge of the cycle start. As will be explained below, the cycle start may be computed only once the value of the cycle length is known, using a binary search. This explains why the authors insist that their Discrete Logarithm Oracle does not need to use the computation of inverses. 
 
Below, we describe Algorithm \ref{banin_algo}, which is the algorithm suggested in \cite{banin} to compute the order of the group $G_x$, i.e. the cycle length $L_x$ of $x$.

\begin{algorithm}[ht]  
\hspace*{\algorithmicindent} \textbf{Input} {A semigroup $S$ and a torsion element $x \in S$; a DLP oracle for groups} \\
 \hspace*{\algorithmicindent} \textbf{Output} {The cycle length $L_x$ of $x$}
\begin{algorithmic}[1]
\STATE Initialize $i,j, g, L_x \leftarrow 1$, $N>>s_x+L_x$. Fix bounds $r>1, s>1$. \\
\STATE \textbf{while $j<s$}
\begin{enumerate}
\item Fix a random $z \in \{\lfloor M/2 \rfloor, \ldots, M\}$ and set $h=x^z$.
\STATE \textbf{while $i<r$}
\begin{enumerate}
\item Choose a random number $k_i>0$. 
\item Use the DLP oracle to compute $k_i^{\prime}=\log_{h}(h^{k_i})$. 
\item Set $g \leftarrow \gcd\limits_{j\leq i} (k_j-k_j^{\prime}) = \gcd\left(\gcd\limits_{j<i} (k_j-k_{j}^{\prime}), \ k_i -{k_i}^\prime \right)$. 
\item Set $i \leftarrow i+1$. \end{enumerate}
\STATE \textbf{end while}
\STATE Set $L_x \leftarrow lcm(L_x, g)$, $j \leftarrow j+1$.
 \end{enumerate}
  \STATE Return $L_x$.
 \caption{Banin-Tsaban Algorithm for Cycle Length } \label{banin_algo}
 \end{algorithmic}
\end{algorithm}

We first note that the authors state complexities in terms of $L_x$, which are valid when the bound $N$ for $N_x$ is known. If the algorithm fails for a value of $N$, the authors suggest to double $N$ and try again. In this case, which we will assume from now on, we assert that the complexities need to be taken in terms of $N_x$ instead of $L_x$. The oracle may be assumed to have the standard complexity of {$O(\sqrt{N_x}\log N_x)$} steps for discrete logarithm calculations. 
Step 2.2(c) takes $\mathcal{O}(\log (\max_{j \leq i}(k_j-k_i))=\mathcal{O}(\log N_x)$ integer operations by the assumption on the oracle, which does not contribute to the total complexity. Thus, the total complexity of step (2.2) comes from the oracle alone, and is $\mathcal{O}( \sqrt{N_x}\log N_x)$. Now, the authors of \cite{banin} remark that $r$ and $s$ can be taken to satisfy $r=\mathcal{O}(1)$ and $s=\mathcal{O}(\log \log N_x)$. Thus, the total complexity is $\mathcal{O}(\log N_x)$ times the complexity of Algorithm \ref{banin_algo}, and thus $\mathcal{O}(\log \log(N_x) \log N_x)$ times the complexity of step (2.2). 
Therefore, we get the total complexity of $ \mathcal{O}(\log \log N_x (\log N_x)^2 \sqrt{N_x})$.

Finally, in Algorithm \ref{cycle_start_algo}, we present the application of the binary search method to find the cycle start once $L_x$ is known. This algorithm is formulated as below for this purpose in \cite{banin}, though the idea to use a binary search is also originally mentioned in \cite{monico2002}.

 \begin{algorithm}[ht] \caption{Calculating Cycle Start (Binary Search)}\label{cycle_start_algo}
\hspace*{\algorithmicindent} \textbf{Input} {A semigroup element $x$ with cycle length $L_x$} \\
 \hspace*{\algorithmicindent} \textbf{Output} {Cycle start $s_x$ of $x$}
\begin{algorithmic}[1]
 \STATE Initialize $s_x \leftarrow 1$ \\
 \STATE \textbf{while $x^{s_x+L_x}\neq x^{s_x}$ do} \\
     \quad \quad $ s_x \leftarrow 2s_x$ \\
\STATE \textbf{end while} 
\STATE Set $a \leftarrow s_x/2$
\STATE \textbf{while $|a-s_x| \geq 2$} \\
 $c \leftarrow (a+s_x)/2$ \\
\quad \textbf{if $x^{c+L_x} \neq x^c$ then} \\
\quad \quad $a \leftarrow c$ \\
\quad \textbf{else} \\
\quad \quad $ s_x \leftarrow c$
 \STATE \textbf{end while}
\end{algorithmic}
\end{algorithm}

\begin{lemma}\label{cycle_start}
Let $N_x$ be the order of the element $x$. Then Algorithm \ref{cycle_start_algo} requires 
$$ \mathcal{O}\left((\log N_x)^2\right). $$
steps.
\end{lemma}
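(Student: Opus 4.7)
The plan is to split the running time of Algorithm \ref{cycle_start_algo} into its two explicit loops and to bound each by $O(\log N_x)$ iterations, each costing $O(\log N_x)$ semigroup operations. For the doubling loop, the variable $s_x$ starts at $1$ and is doubled until the test $x^{s_x+L_x} = x^{s_x}$ succeeds. By the earlier lemma characterising equality of powers, this test holds exactly when the running variable is at least the true cycle start of $x$; since the true cycle start is bounded by $N_x$ and the variable doubles each pass, the loop terminates in at most $\lceil \log_2 N_x \rceil + 1$ iterations.

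For the binary-search loop, the interval $[a, s_x]$ initially has length at most $N_x/2$, and each iteration halves its length, so it terminates after $O(\log N_x)$ further iterations. The invariant that makes this work is that the true cycle start always lies in $[a, s_x]$, with $x^{a+L_x}\neq x^a$ and $x^{s_x+L_x}=x^{s_x}$; once $|a - s_x|<2$, the loop has isolated the cycle start.

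Finally, each iteration of either loop is dominated by two exponentiations of the form $x^e$ with exponent $e = O(N_x)$, each requiring $O(\log N_x)$ semigroup multiplications via repeated squaring, plus a single equality test. Multiplying the two logarithmic factors yields the claimed bound of $O((\log N_x)^2)$ steps. The only point that requires care, and is the closest thing to an obstacle, is justifying that the termination condition $x^{s_x+L_x}=x^{s_x}$ really is equivalent to $s_x$ having reached the true cycle start; this reduces to invoking the earlier lemma (the modular condition $s_x + L_x \equiv s_x \pmod{L_x}$ is automatic, so equality holds iff both exponents exceed the true cycle start). Once this is noted, the remaining count is a standard binary-search argument.
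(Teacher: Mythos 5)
Your proposal is correct and follows essentially the same route as the paper's proof: bound each of the two loops by $\mathcal{O}(\log N_x)$ iterations, observe that each iteration is dominated by exponentiations costing $\mathcal{O}(\log N_x)$ semigroup multiplications, and multiply. The additional care you take in justifying the termination test via the equality-of-powers lemma is a welcome elaboration but not a departure from the paper's argument.
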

\begin{proof}
Each of Steps (2) and (5) involves $\mathcal{O}(\log N_x)$ rounds, each of which computes requires $\mathcal{O}(\log N_x)$ semigroup multiplications and one comparison. The total complexity is thus $ \mathcal{O}\left((\log N_x)^2\right)$. 
\end{proof}

\subsection{Monico's Algorithm}\label{monico}

 In his PhD thesis \cite{monico2002}, Chris Monico provides a probabilistic algorithm (described below as Algorithm \ref{monico_algo}) that calculates the cycle length of an element in a finite ring of order $N$. This algorithm makes use of the multiplicative semigroup structure of the finite ring, and of the availability of the explicit bound $N$ for every cycle length, and is in fact applicable to any  semigroup where such a bound $N$ is available. In this subsection, we analyse this algorithm, provide a more concrete bound on its success rate, and compute its complexity in terms of $N$. We will discuss this algorithm in terms of torsion semigroups, as opposed to finite rings.

\begin{algorithm}[ht] \caption{Monico's Baby-Step Giant-Step for Cycle Length}\label{monico_algo}
\hspace*{\algorithmicindent} \textbf{Input} {A finite semigroup $S$ with $\lvert S \rvert = N$ and an element $x \in S$} \\
 \hspace*{\algorithmicindent} \textbf{Output} {The cycle length $L_x$ of $x$}
\begin{algorithmic}[1]
\STATE Set $m = \lceil\sqrt{N}\rceil$. Choose a prime $q > N$. \\

\STATE For $0 \leq i \leq m$, compute and store in a table the pairs $(i; x^{q+im})$. \\
Sort the table by the second component. \\

\STATE Find the least positive integer $b_1$ such that $x^{q+b_1}$ is in the table: $x^{q+b_1} = x^{q+a_1m}$. (Note: $0 < b_1 < m$). \\
\STATE Find the least positive integer $b_2$ such that $x^{2q+b_2}$ is in the table: $x^{2q+b_2} = x^{q+a_2m}$. (Again, $0 < b_2 < m$). \\
\STATE Compute $g= \gcd(a_1m - b_1, a_2m - b_2 - q)$. \\ 
\STATE For each divisor $d$ of $g$ below some bound $B$, do the following. \\ 
\do{ \quad \textbf{If} {$x^{N+g/d} = x^N$:} \\
\qquad set $g\leftarrow g/d$\;}
\STATE Output $L_x = g$ and stop.
\end{algorithmic}

\end{algorithm}

We first note that if $L_x>m$ and the table in Step (2) has repeated entries $x^{q+i_1m}=x^{q+i_2m}$, then numbers $b_1$ and $b_2$ may not exist below $m$. In this case the algorithm needs to be modified to take $g \leftarrow (i_1-i_2)m$. However, whenever this case does not arise, it can be shown that steps 3 and 4 are always successful in finding a collision. 

We further remark that in step 6, the list of divisors of $g$ is kept fixed, while $g$ is updated to $g/d$ whenever the condition is satisfied. In the subsequent steps, non-divisors of $g/d$ can be immediately discarded. However, the end result depends on the order in which divisors are tested, which the algorithm does not mention explicitly. However, we note that it is, in fact, possible to restrict the testing to only the prime power divisors of $g$ below $B$, and with this setting, the optimal performance is obtained by taking divisors in decreasing order. We will assume this set-up for the rest of the analysis. 

Step (2) involves $\mathcal{O}(\log N)$ steps compute $x^q$ and $x^m$ and another $\mathcal{O}(\sqrt{N})$ multiplications to compute $x^q, x^q\cdot x^m, x^q\cdot x^2m, \ldots, x^q\cdot x^{m^2}$. Step 3 involves at most $m$ multiplications $x^{q+1}=x^q\cdot x, x^{q+1}\cdot x, \ldots, x^{q+m-1}$, with complexity $\mathcal{O}(\sqrt{N})$, and match-finding with the first list, with complexity $\mathcal{O}(\sqrt{N}\log N)$ with standard sorting and search algorithms. The same is true for step (4). Step (5) has complexity $\mathcal{O}(\log \max(a_1m-b_1, a_2m-b_2-q)) = \mathcal{O}(\log N)$ and so does not contribute to the overall complexity. Step 6 involves $B$ iterations of a multiplication and an exponentiation $x^{g/d}$, and thus has a time complexity of $\mathcal{O}(B(\log g+1)) =\mathcal{O}(B\log N) $ multiplications.

In the original work, Monico states that the bound $B$ of Algorithm \ref{monico_algo} can always be chosen so that $B< \sqrt{a_1m -b_1}$. We remark that this claim does not hold in the current setting of the algorithm. For example, with a cycle length value of 4, and $a_1m-b_1 = 104$, $a_2m-b_2-q = 52$, we get $g=52$. If $B< \sqrt{a_1m-b_1} = \sqrt{104} < 11$, then we would only test divisors $d$ below 11, and would never factor out 13 to obtain the true cycle length. For such a bound to work, one needs to modify the algorithm to test both divisors $g$ and $g/d$ in step 6. However, we will show in Lemma \ref{monico_success} that it is almost always sufficient to take $B$ to be a reasonably large fixed constant, thus the complexity of step 6 can be counted as $\mathcal{O}(\log N)$, and does not contribute to the overall complexity. Thus, the overall time complexity is $\mathcal{O}(\sqrt{N}\log N)$. If $N$ is unavailable, the algorithm can also be modified to update the value of $N$ step-by-step until a large enough value is found. In this case, Algorithm~\ref{monico_algo} has a total complexity of $\mathcal{O}(\sqrt{N_x}(\log N_x)^2)$.

Further, Monico suggests a modification to the above algorithm, viz. to find several such $a_i$ and $b_i$ and compute all the gcd's. It is clear that this suggestion is exactly the method used in Banin and Tsaban's algorithm as discussed in Section \ref{banin}.

We now analyze the probability of success. The algorithm first looks for a collisions of the form $x^{q+{a_1}m} = x^{q+b_1}$. The working principle is that in this case, the cycle length $L_x$ divides $a_1m-b_1$. Similarly, if also $x^{q+{a_2}m} = x^{2q+b_2}$ then $g = \gcd(a_1m-b_1, a_2m-b_2-q)$ is a multiple of $L_x$.

 So far, the process is essentially the same in both Algorithms \ref{banin_algo} and \ref{monico_algo}: while the former uses a discrete logarithm oracle to obtain multiples of the cycle length, the latter directly finds these multiples by finding collisions. However, in Algorithm \ref{monico_algo}, we do not proceed with computing multiple factors of $L_x$, but work with the fixed multiple $g$ of $L_x$, whereas in Algorithm \ref{banin_algo} this multiple shrinks several times.

Algorithm \ref{monico_algo} then proceeds by fixing a bound $B$ and iterating over every number $d$ below $B$ to check if $d \mid g$. If yes, it executes the next part, i.e. checks if $x^{N+D/d} = x^N$, and if this holds, it sets $D \leftarrow D/d$. Note that if the factorization of the number $g$ is known (or if $g$ can be factored in time negligible compared to $O(\sqrt{N})$, then we do not need this fixed bound $B$, and can instead iterate over every prime factor $d$ of $g$. It is well-known that the number of prime factors of $g$ counted with multiplicity is $\mathcal{O}(\log g)$, so Step (5) of the algorithm can find $L_x$ in $\mathcal{O}(\log N)$ steps. However, in general, factoring $g$ may be difficult, so we assume from here on that the algorithm proceeds by fixing a bound $B$ for the divisors of $g$. Below we analyse the probability of the algorithm succeeding in terms of $B$ and $g$.

\begin{lemma} \label{monico_success}  The probability that Algorithm \ref{monico_algo} succeeds is bounded below by $\left(1-\frac{1}{B}\right)^{\log g}$.
\end{lemma}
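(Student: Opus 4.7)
The plan is to characterize the failure mode of the algorithm precisely and then reduce the success probability to a product of independent per-prime events. Write $k := g/L_x$ for the ``extra'' factor by which the gcd $g$ from Step (5) overshoots the true cycle length. Because Step (6) scans prime-power divisors $d$ of $g$ below $B$ in decreasing order and replaces $g$ by $g/d$ whenever $d \mid g$ and $L_x \mid g/d$, the algorithm terminates with $g = L_x$ if and only if every prime-power component $p^{\nu_p(k)}$ exactly dividing $k$ satisfies $p^{\nu_p(k)} < B$. Failure therefore corresponds to the event that some prime $p$ contributes a prime-power divisor of $k$ of size at least $B$.

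Next, I would bound the number of distinct primes that could possibly cause failure. Since $k \le g$ and every integer $n > 1$ has at most $\log_2 n$ prime factors (counted with multiplicity, hence certainly distinctly), the number of distinct prime divisors of $k$ satisfies $\omega(k) \le \omega(g) \le \log g$.

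For each such prime $p$ I would then estimate $\Pr[p^{\nu_p(k)} \ge B]$. Treating the two integers $(a_1 m - b_1)/L_x$ and $(a_2 m - b_2 - q)/L_x$ as independent and uniformly distributed modulo $p^e$ (the standard heuristic underlying gcd-based algorithms of this kind), one obtains $\Pr[p^e \mid k] \le 1/p^{2e}$. Summing this geometric tail over all $e$ with $p^e \ge B$ gives the per-prime bound $\Pr[p^{\nu_p(k)} \ge B] \le 1/B$.

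Finally, assuming independence of the events $\{p^{\nu_p(k)} \ge B\}$ across distinct primes $p$ (again a standard heuristic), I would multiply to obtain
\[ \Pr[\text{success}] \ \ge\ \prod_{p \mid g}\left(1 - \tfrac{1}{B}\right)\ =\ \left(1 - \tfrac{1}{B}\right)^{\omega(g)}\ \ge\ \left(1 - \tfrac{1}{B}\right)^{\log g}, \]
where the last inequality uses $1 - 1/B < 1$ together with $\omega(g) \le \log g$. The main obstacle is not any individual computation but the modelling: both the per-prime estimate and the independence across primes are heuristic assumptions about the distribution of $g$ induced by the collision-finding in Steps (3)--(4), and making them fully rigorous would require explicitly specifying the probability space. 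This is consistent with the probabilistic nature of Algorithm \ref{monico_algo} and with the level of rigor at which the original analysis in \cite{monico2002} is stated.
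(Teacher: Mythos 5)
Your proof is correct and reaches the stated bound, but it differs from the paper's argument in two substantive ways. First, your success criterion is more conservative: you declare success only when every prime-power component $p^{\nu_p(k)}$ of $k=g/L_x$ is itself below $B$, i.e.\ $\nu_p(k)\leq \beta_p$ where $p^{\beta_p}$ is the largest power of $p$ under $B$. The paper instead observes that since Step~(6) tests \emph{all} prime powers $p, p^2,\dots,p^{\beta_p}$ in turn against the fixed divisor list, it can cumulatively strip out up to $\beta_p(\beta_p+1)/2$ factors of $p$, so success per prime only requires $\nu_p(k)\leq \beta_p(\beta_p+1)/2$. Your ``if and only if'' is therefore really only an ``if'' (the algorithm can succeed even when some $p^{\nu_p(k)}\geq B$), but since a sufficient condition for success still yields a valid lower bound on the success probability, your conclusion stands; you should just weaken that sentence to an implication. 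Second, your probability model is different and in fact sharper: the paper treats $F=g/L_x$ as a single uniform random integer and gets a per-prime failure probability $1/p^{v_p+1}<1/B$, whereas you exploit the gcd structure by treating the two collision offsets $(a_1m-b_1)/L_x$ and $(a_2m-b_2-q)/L_x$ as independent and uniform, giving a per-prime failure probability of at most $1/p^{2(\beta_p+1)}\leq 1/B^2$, which you then relax to $1/B$. Both models are heuristic, and both proofs share the same final steps ($\omega(k)\leq\log_2 k\leq \log g$, independence across primes, and the product bound), so the two arguments buy the same theorem; the paper's refinement of the success condition and your refinement of the probability model each show, in different ways, that the true success probability comfortably exceeds the stated bound.
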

\begin{proof}
We write $g=L_x\cdot F$ for some number $F$ and suppose that the algorithm fails. This means that there is a divisor, and hence also a prime power divisor of $F$, which the algorithm fails to factor out. Let $p$ be a prime dividing $F$, $\alpha_p$ denote its largest power dividing $F$, and $\beta_p$ be its largest power below the fixed bound $B$. So, we have $p^{\alpha_p} \mid F$, $p^{\alpha_p+1} \nmid F$, $p^{\beta_p}<B$, $p^{\beta_p+1}>B$. 

Since the number of times the algorithm divides $g$ by $p$ is $$\sum\limits_{i=1}^{\beta_p} i = \beta_p\cdot (\beta_p+1)/2,$$ we must have $\beta_p\cdot (\beta_p+1)/2 < \alpha_p$ if the algorithm fails. So, the algorithm succeeds as long as  $\beta_p\cdot (\beta_p+1)/2 \geq \alpha_p$ for every prime divisor $p$ of $F$.
Thus, the probability of success for the algorithm can be bounded below by \begin{align*}
    \prod_{p \mid g} \mathrm{Prob} \left(\frac{\beta_p\cdot(\beta_p+1)}{2} \geq \alpha_p \right).
\end{align*}

Write $ v_p= \frac{\beta_p(\beta_p+1)}{2}$ for simplicity. We may assume that $g$ is a random multiple of $L_x$ below the bound $B$, so $F$ is a random number in $\{1, \ldots, \frac{B}{L_x}\}$. We have,
\begin{align*}
 \mathrm{Prob} ( \alpha_p \leq v_p)  = &  1-\mathrm{Prob}(p^{v_p+1}\mid F) \\ =& 1-\left(\frac{B/L_x}{p^{v_p+1}(B/L_x)}\right) \\ =&  1-1/p^{v_p+1}  =   1-\dfrac{1}{p^{\frac{\beta_p(\beta_p+1)}{2}+1}}. \end{align*}
 Hence, a lower bound for the probability of the algorithm's success is $$\prod\limits_{p \mid F} \left(1-\dfrac{1}{p^{\frac{\beta_p\cdot(\beta_p+1)}{2}+1}}\right).$$

 Now, we have,  \begin{align*}
    & p^{\beta_p+1}>B \; \iff  \dfrac{1}{p^{\beta_p+1}} < \dfrac{1}{B} \\
    \implies & 1-\dfrac{1}{p^{\frac{\beta_p(\beta_p+1)}{2}+1}} > 1-\dfrac{1}{B^{\frac{\beta_p}{2}+1}} >1- \dfrac{1}{B}.
\end{align*}

We further make the following observation. Let $\omega(n)$ denote the number of distinct prime divisors of integer $n$ (note, however, that the same statement also holds if counted with multiplicity). Then clearly, $2^{\omega(n)} \leq n,$ and so, taking logarithms, $\omega(n) \leq \log_{2} n.$

Collecting all the above results, we conclude that the probability of success {Prob (success)}  of Algorithm \ref{monico_algo} is bounded below as follows.  \begin{align*}
    \text{Prob (success)} &\geq \prod_{p\mid F}\left(1- \frac{1}{B}\right) \\ &= \left(1- \frac{1}{B}\right)^{\omega(F)} \geq \left(1-\frac{1}{B}\right)^{\log F} \\ &\geq \left(1-\frac{1}{B}\right)^{\log g}.
\end{align*} \end{proof}
Note that this bound shows that Algorithm \ref{monico_algo} is indeed successful with overwhelming probability, as conjectured by the author. For example, with $B=10^6$, even when $g$ is several orders of magnitude larger than $B$, say $g = 2^{4000}$, the probability of success is greater than 99.6 percent, by the bound derived in Lemma \ref{monico_success}.

\section{Deterministic Solution of the DLP}\label{det}
 The solution of the DLP in a semigroup involves two parts: the calculation of the cycle length and start of the base element $x$, and the use of this value to find the discrete log. 
 
\subsection{Deterministic Algorithm for Cycle Length Computation}

We now present our deterministic algorithm for the computation of the cycle length. It works by finding a suitable collision, and also guarantees finding the actual cycle length rather than just a multiple of it, in a fixed number of steps. 

\begin{algorithm}[ht]
 \caption{Deterministic Algorithm for Cycle Length} \label{det_algo}

\hspace*{\algorithmicindent} \textbf{Input} {A semigroup $S$ and a torsion element $x \in S$.
Assume $N_x$ is the order of $x$.} \\
 \hspace*{\algorithmicindent} \textbf{Output} {Cycle length $L_x$ of $x$}

\begin{algorithmic}[1]

\STATE{Initialize $N \leftarrow 1$.}
\STATE{Set $q \leftarrow \lceil \sqrt{N} \rceil$. \\
\STATE{Compute, one by one, $x^N, x^{N+1}, \ldots, x^{N+q}$ and check for the equality $x^N=x^{N+j}$ at each step $j\geq 1$. Store these values in a table as pairs $(N+j, x^{N+j})$, $0 \leq j < q$. If $x^N=x^{N+j}$ for any $j<q$, then set $L_x\leftarrow j$ and end the process. If not, proceed to the next step.} \\
\STATE {For $0 \leq i \leq q$, compute, one by one, the values $x^{N+q}, x^{N+2q}, \ldots, x^{N+iq}$ and at each step $i$, look for a match in the table of values calculated in Step (3).} \\
\STATE{ Suppose that a match $x^{N+iq} = x^{N+j}$ is found, and $i$ is the smallest integer such that this happens. Set $L_x \leftarrow iq - j$ and end the process. }\\
\STATE{If no match is found in steps 3 or 5, set $N \leftarrow 2\cdot N$ and go back to Step (2).}} \\
 \end{algorithmic}
\end{algorithm}

\begin{theorem}  \label{cycle_length}
Let $S$ be a semigroup and $x\in S$ a torsion element with order $N_x$.
If an upper bound on $N_x$ is known, Algorithm~\ref{det_algo} returns the correct value of the cycle length $L_x$ with 
\[\mathcal{O}\left( \sqrt{N_x}\cdot  (\log N_x)^2\   \right)  \] steps. The total space complexity is $\mathcal{O}\left(\sqrt{N_x}\right)$ semigroup elements.
\end{theorem}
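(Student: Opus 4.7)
The plan is to split the theorem into correctness, termination, and complexity, each controlled by the behavior of a single iteration at a given guess $N$.

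For correctness, I will fix an iteration in which $N \geq \max(s_x, L_x)$ and reason as follows. Since $N \geq s_x$, the cycle-congruence lemma (the one stated right after Lemma~\ref{group_within}) gives $x^{N+iq} = x^{N+j}$ iff $iq \equiv j \pmod{L_x}$. Since $N \geq L_x$ and $q = \lceil \sqrt{N} \rceil$, we have $q^2 \geq L_x$; the classical baby-step giant-step pigeonhole then exhibits $i = \lceil L_x / q \rceil \leq q$ and $j = iq - L_x \in [0, q)$ with $iq - j = L_x$, so at least one collision is detected---either in step 3 (if $L_x < q$) or in step 4. I will then verify that the value returned equals $L_x$ exactly: any collision yields $iq - j = k L_x$ for some $k \geq 1$, and the smallest $i$ admitting such a solution for given $k$ is $i_k = \lceil k L_x / q \rceil$; a short calculation shows $i_k > i_1$ for every $k \geq 2$ whenever $L_x \geq q$. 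Because step 3 already resolves the $L_x < q$ case, step 4 is only ever invoked under $L_x \geq q$, so the minimum $i$ found there corresponds to $k = 1$. I expect this ``$k=1$ versus $k \geq 2$'' separation to be the main subtlety of the proof: without the baby-step phase securing $L_x \geq q$ inside step 4, a larger multiple of the cycle length could in principle be returned.

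For termination and complexity, I will argue that each failed iteration doubles $N$, so within $\lceil \log_2 N_x \rceil + 1$ rounds $N$ exceeds $\max(s_x, L_x) \leq N_x$ and the correctness argument fires. Within one iteration, computing $x^N$ and $x^q$ from scratch each cost $\mathcal{O}(\log N)$ multiplications; the baby-step and giant-step walks contribute $\mathcal{O}(\sqrt{N})$ multiplications each; sorting the table of $q$ baby steps takes $\mathcal{O}(\sqrt{N} \log N)$ comparisons; and each of the $q$ binary-search lookups in step 4 costs $\mathcal{O}(\log N)$. A single iteration therefore costs $\mathcal{O}(\sqrt{N} \log N)$. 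Summing over the $\mathcal{O}(\log N_x)$ iterations, each bounded by $N \leq 2 N_x$, yields the total $\mathcal{O}(\sqrt{N_x} (\log N_x)^2)$. For space, only the baby-step table of $\mathcal{O}(\sqrt{N}) = \mathcal{O}(\sqrt{N_x})$ semigroup elements must be retained, matching the stated bound.
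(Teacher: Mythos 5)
Your proposal is correct and follows essentially the same route as the paper's proof: establish that for $N \geq \max(s_x, L_x)$ the representation $L_x = iq - j$ with $0 < i \leq q$, $0 \leq j < q$ forces a collision, argue that the minimal giant-step index yields $L_x$ itself rather than a proper multiple (your parametrization of collisions by the multiplier $k$ with $i_k > i_1$ for $k \geq 2$ is just a repackaging of the paper's contradiction argument, and both correctly rely on step 3 having disposed of the case $L_x < q$), and then sum the $\mathcal{O}(\sqrt{N}\log N)$ per-round cost over the $\mathcal{O}(\log N_x)$ doubling rounds. No gaps.
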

\begin{proof}

We first assume $N \geq \max(L_x, s_x)$ and show that steps 1 to 5 succeed in finding $L_x$. We have $q=\lceil\sqrt{N}\rceil$. If $L_x<q$, then the equality $x^N= x^{N+L_x}$ is found in the first step and the statement of the theorem follows. Else if $L_x \geq q$, we can write uniquely \[L_x = iq - j,\] for some positive integers $i>0$, $0 \leq j<q$. Now, we must have $i \leq q$, because otherwise if $i \geq q+1$, we would have $$L_x \geq (q+1)q - j > q^2 +q -q = q^2 \geq N,$$ a contradiction.

We have \begin{align*}
   & L_x=iq-j, \; 0<i\leq q, 0\leq j<q\\
     \implies & N+j+L_x=N+iq \\
     \implies & x^{N+j}= x^{N+j+L_x} = x^{N+iq},
\end{align*} where the last step follows because $N>s_x$ by assumption.
So, such a collision always occurs between elements of the two lists in the algorithm.

We now claim that for the smallest such integer $i$ computed in Step (5) of Algorithm \ref{det_algo}, $L_x=iq-j$. \\ 
To see this, let $i$ be the smallest positive integer such that \[x^{N+j} = x^{N+iq}.\] Also let $L_x = i^{\prime}q-j^{\prime}$, $0< i^{\prime}\leq q$, $0\leq j^{\prime}<q$. We have already shown above that such integers $i^{\prime}$ and $j^{\prime}$ exist for our choice of $N$.
By the definition of $L_x$, we must have $L_x\mid iq-j$. Now suppose that $i^{\prime}>i$. Then, \begin{align*}
    i^{\prime}q-j^{\prime} \geq & (i+1)q-j^{\prime} \\ = & iq+(q-j^{\prime}) > iq \\  \geq & iq-j.
\end{align*}
But, $L_x= i^{\prime}q-j^{\prime} \mid iq-j$, so we must have $iq-j =  i^{\prime}q-j^{\prime}$. Since $i^{\prime}>i$, this means that \[q \leq(i^{\prime}-i)q = (j^{\prime} - j) < j^{\prime},\] which is a contradiction because $0\leq j^{\prime}<q$. So, we must have $i^{\prime} = i$, $j^{\prime}=j$. This proves the claim.

We have shown above that the algorithm finds the correct cycle length when $N>\max(s_x, L_x)$. Since the algorithm doubles the value of $N$ until a match is found, it always terminates and outputs the correct cycle length. We now look at the time complexity. 

For a given $N$, step (2) involves one exponentiation, or $\mathcal{O}(\log N)$ multiplications to find $x^N$ and then at most another $q= \mathcal{O}(\sqrt{N})$ multiplications and equality checks for $x^{N}\cdot x, x^{N}\cdot x^2, \ldots, x^{N}\cdot x^q$. This step also needs a storage space of at most $q= \mathcal{O}(\sqrt{N})$ elements. Step 5 needs one exponentiation or $\mathcal{O}(\log N)$ multiplications to find $x^q$, and then another $q= \mathcal{O}(\sqrt{N})$ multiplications to find $x^{N+q}\cdot x^q, x^{N+q}\cdot x^{2q} \ldots,  x^{N+q^2}$. Finding matches in steps 3 and 5 can be done in $\mathcal{O}(q \log q)=\mathcal{O}(\sqrt{N} \log \sqrt{N})$ comparisons with the use of sorting and efficient look-up methods. Thus, clearly, steps 1 to 5 in algorithm \ref{det_algo} have a total complexity of $\mathcal{O}(\sqrt{N}\log N)$.

Moreover, the algorithm starts at $N=1$ and doubles $N$ until the cycle length is found, i.e. until $N>\max(s_x, L_x)$. Thus, the number of times steps 2 to 5 are performed is \begin{align*}
  &\left\lceil \log\left(\max\left(L_x, s_x\right)\right)\right\rceil  =  \mathcal{O}\left(\max\left( \log\left(L_x\right),  \log(s_x)\right) \right)  = \mathcal{O}(\log N_x)
\end{align*}

Thus, the total number of steps involved is 
\[\mathcal{O}\left( \sqrt{N_x} \cdot (\log N_x)^2 \right). \]
Clearly, Step (3) involves the storage of $q = \lceil \sqrt{N}\rceil = \mathcal{O}\left(\sqrt{\max(s_x, L_x)}\right) = \mathcal{O}\left(\sqrt{N_x}\right)$ elements, so this value gives the total space complexity. This completes the proof.
\end{proof}

\begin{remark} If a bound $N$ on the order $N_x$ is known a priori, then Algorithm~\ref{det_algo} can clearly be completed in a single round, with time complexity $\mathcal{O}\left( \sqrt{N} \cdot (\log N) \right)$.
\end{remark}

\begin{remark} For the case of a group, there exist better algorithms for the computation of the order of an element even when the total group order is unbounded. For instance, Algorithm 3.3 in \cite{Sutherland2007OrderCI} uses a growth function $d(t)$, which generalizes the square root function used above, to compute the order $N$ of a group element $x$, and achieves time and space complexities of $\mathcal{O}\left(\sqrt{N}\right)$, thus eliminating the additional $\log N$ multiplier introduced by the method in Algorithm \ref{det_algo}. 

However, this method fails when used for a general semigroup due to the presence of two independent unknown components of the order. To see this, note that the algorithm would need to be modified for a semigroup as follows. At stage $t$, one has $g(t-1) \leq N_x < g(t)$. On the completion of the baby steps, one has a table with the powers $x^{g(t)}, x^{g(t)+1}, \ldots, x^{g(t)+b(t)}$ (the addition of $g(t)$ is necessary in the semigroup case to ensure that the loop is entered). The giant steps compute $x^{g(t)+g(t-1) +\cdot b(t)}, x^{g(t)+g(t-1) + 2\cdot b(t)}, \ldots, x^{g(t)+g(t-1) +d(t)\cdot b(t)} =x^{2g(t)}$. Now, while $N_x$ is guaranteed to have a unique expression as $g(t-1)+ ib(t) - j$ with $0< i\leq d(t)$ and $0\leq j \leq b(t)$, this does not necessarily lead to a collision. In fact, if $b(t)<L_x <g(t-1)$ and $2L_x > g(t)=g(t-1) +d(t)\cdot b(t)$, then neither the baby steps nor the giant steps leads to a collision, and the cycle length is never found (note that this can happen only if $L_x>s_x$). Moreover, if a collision $x^{g(t)+g(t-1)+ ib(t)} = x^{g(t)+j}$ is obtained in the giant step phase, the only conclusion that can be drawn is that $L_x \mid g(t-1)+ ib(t) - j$. If instead we forced the condition $g(t-1) \leq N_x < g(t)$, a collision again may never occur because there is no control on the cycle start (For instance, in matrix semigroups over finite simple semirings, the cycle start is often found to be much larger than the cycle length. In such cases, adapting group-based algorithms would fail). See Remark 1 for further details. 
\end{remark}

\subsubsection{Experimental Results for Cycle Length Computations}
We used Algorithm \ref{det_algo} to compute cycle length values in several common semigroups, such as matrix semigroups over finite fields, matrix semigroups over the finite simple semiring $S_{20}$ (see \cite{zumbr} for a construction and \cite{ma07} for the addition and multiplication tables), and the symmetric and alternating groups (where the cycle length is precisely the order of the element).  We further used the obtained cycle lengths to compute the cycle start values using Algorithm \ref{cycle_start_algo}.  The working code may be found at  \begin{small} \href{https://github.com/simran-tinani/semigroup-cycle-length}{https://github.com/simran-tinani/semigroup-cycle-length}. \end{small}

\subsection{Solving the DLP once the Cycle Length is known}
In this section, we demonstrate the solution of the DLP for a torsion element $x$ in the semigroup $S$
once the cycle length is known.
As before let $N_x$ be the order of the sub-semigroup $\langle x \rangle$, let $L_x$ be the cycle length of the torsion element $x$ (which we assume is already computed) and let $y\in \langle x \rangle $ be an element.

In \cite{banin}, the authors demonstrate the next steps in solving for $\log_x(y)$, via a reduction to a DLP in the group $G_x$, once $L_x$ and $s_x$ are known. The procedure is described in Algorithm \ref{dlp_algo_main} below, which has been adapted from the original formulation in \cite{banin}.

\begin{algorithm}[ht]
\hspace*{\algorithmicindent} \textbf{Input} {A semigroup $S$, a torsion element $x \in S $, with cycle length $L_x$ and cycle start $s_x$, and $y\in S$ with $y=x^m$} \\
 \hspace*{\algorithmicindent} \textbf{Output} {The discrete logarithm $m$ of $y$ with base $x$}
\begin{algorithmic}[1]
\STATE Compute $t= \left\lceil\frac{s_x}{L_x}\right\rceil$ and define $x^{\prime} = x^{tL_x+1} \in G_x$. \\
\STATE Find the minimum number $0 \leq b\leq t$ such that $y^{\prime}= y\cdot x^{bL_x} \in G_x$ using binary search.  \\
\STATE Use Shank's Baby-Step Giant-Step algorithm for the group $\langle x^\prime \rangle \subseteq G_x$ to compute $m^\prime \in \{0, 1, \ldots, L_x-1\}$ such that ${(x^\prime)}^{m^\prime}= {y}^{\prime}$. \\
\STATE  Find the maximum number $c\geq 0$ such that $x^{(tL_x+1)m^{\prime}-cL_x} \in G_x$ using binary search. \\
    \STATE  Return $m = m^{\prime}(tL_x+1) - (b+c)L_x$.
\caption{Algorithm for Discrete Logarithm}\label{dlp_algo_main}
\end{algorithmic}
\end{algorithm}

Since authors of \cite{banin} do not give an explicit proof of correctness Step 5 in Algorithm \ref{dlp_algo_main}, we provide it in Theorem~\ref{algo_correctness}. Before this, we will need the following technical result.

\begin{lemma}\label{proof_gap} Let $L_x$ be the cycle length of $x \in S$, and $n$, $a$, and $a^\prime$ be fixed positive integers. Suppose that $x^{bL_x+n} = x^a \in G_x$, where $b$ is the minimum number such that $x^{bL_x+n} \in G_x$, and $x^{n-cL_x} = x^{a^{\prime}} \in G_x$, where $c$ the maximum number such that $x^{n-cL_x} \in G_x$. Then \[bL_x+n \leq a, \ \text{and} \ n-cL_x \leq a^{\prime}.\]
\end{lemma}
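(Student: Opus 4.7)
The plan is to reduce both inequalities to a uniform statement about the minimum representative of a congruence class that lies in the interval $[s_x, \infty)$. First I would invoke the two structural facts already established: by Lemma~\ref{group_within}, $G_x = \{x^{s_x+k} : k \geq 0\}$, so $x^m \in G_x$ exactly when $m \geq s_x$; and by the subsequent lemma, whenever $m,n \geq s_x$, the equality $x^m = x^n$ is equivalent to $m \equiv n \pmod{L_x}$. These are the only ingredients I will need.

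For the first inequality, I would argue as follows. Since $b$ is the minimum integer for which $x^{bL_x+n} \in G_x$, we have $bL_x + n \geq s_x$ while $(b-1)L_x + n < s_x$, so
\[
s_x \;\leq\; bL_x + n \;<\; s_x + L_x.
\]
The assumption $x^a \in G_x$ forces $a \geq s_x$, and the assumption $x^a = x^{bL_x+n}$ together with both exponents being $\geq s_x$ gives $a \equiv bL_x + n \pmod{L_x}$. Writing $a = bL_x + n + kL_x$ for some integer $k$, the inequality $bL_x + n < s_x + L_x$ forces $k \geq 0$ (otherwise $a < s_x$, contradicting $a \geq s_x$). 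Hence $a \geq bL_x + n$.

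The second inequality follows by the symmetric argument with $c$ playing the role dual to $b$. Maximality of $c$ means $n - cL_x \geq s_x$ while $n - (c+1)L_x < s_x$, so $n - cL_x$ again lies in the window $[s_x, s_x + L_x)$. Combining $a' \geq s_x$ with $a' \equiv n - cL_x \pmod{L_x}$ then forces $a' \geq n - cL_x$ by exactly the same reasoning.

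The only subtlety, and the place where I would be most careful when writing this up, is keeping straight that $b$ and $c$ are defined so that $bL_x + n$ and $n - cL_x$ are each the \emph{unique} element of their residue class modulo $L_x$ that lies in the half-open interval $[s_x, s_x+L_x)$. Once this is observed, both inequalities are immediate, because any other exponent representing the same group element of $G_x$ must be obtained by adding a non-negative multiple of $L_x$. No further machinery beyond the cycle structure of $\langle x \rangle$ is required.
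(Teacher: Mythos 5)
Your proof is correct and takes essentially the same route as the paper's: both arguments rest on the fact that minimality of $b$ (resp.\ maximality of $c$) pins $bL_x+n$ (resp.\ $n-cL_x$) down as the smallest exponent $\geq s_x$ in its residue class modulo $L_x$, while any exponent $a$ with $x^a \in G_x$ must be $\geq s_x$ and congruent to it modulo $L_x$. The paper packages this as a proof by contradiction (a strictly smaller admissible $b-k$, resp.\ larger $c+k$, would otherwise exist), whereas you argue directly via the window $[s_x, s_x+L_x)$; the content is identical.
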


\begin{proof}
First let $x^{bL_x+n} = x^a$ with $b$ minimal such that $x^{bL_x+n}\in G_x$. Suppose, to the contrary, that $bL_x+n>a$. We must have, by the minimality of $b$, $x^{(b-1)L_x + n} \not\in G_x$, so $(b-1)L_x + n <a$. \begin{align*}
\text{But}, \; \; \;    & x^{bL_x+n} = x^a \in G_x \\
    \implies & bL_x+n-a = kL_x, \ k \geq 1 \\
    \implies & (b-k)L_x + n = a \\
    \implies & x^{(b-k)L_x + n} = x^a \in G_x, \ k \geq 1.
\end{align*} This is a contradiction to the minimality of $b$. So, $bL_x+n \leq a$.
Now suppose that $ x^{x-cL_x} = x^a \in G_x$, with $c$ maximal, and suppose that $n-cL_x >a^\prime$. We argue as above:
\begin{align*}
   & L_x \mid n-cL_x-a^\prime \\
\implies & n-(k+c)L_x=a^\prime, \ \text{for some} \ k \geq 1 \\
  \implies & x^{n-(k+c)L_x} = x^{a^\prime} \in G_x,
\end{align*} which is a contradiction to the maximality of $c$. Thus $n-cL_x \leq a^{\prime}$.
\end{proof}

\begin{theorem}\label{algo_correctness}
Let $S$ be a semigroup, $x\in S$ a torsion element and 
 $y\in \langle x \rangle $  any element. Assume the cycle length $L_x$ and cycle start $s_x$ of $x$
 are known. Then 
 Algorithm \ref{dlp_algo_main} returns the correct values of the discrete logarithm $m=\log_x(y)$ in $\mathcal{O}\left(  \sqrt{L_x}+(\log N_x)^2 \right)$ semigroup multiplications,
 with a required storage of $\mathcal{O}\left( \sqrt{L_x} \right) $
semigroup elements.
\end{theorem}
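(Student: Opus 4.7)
The plan is to handle correctness via a case analysis on whether $y$ lies in $G_x$, and then to tally the complexity step by step. For correctness, write $y = x^{m_0}$ for some $m_0 \geq 1$. In Step~2, $b$ is chosen minimally so that $y' := y\cdot x^{bL_x} = x^{m_0+bL_x}$ lies in $G_x$, i.e.\ $m_0+bL_x \geq s_x$. Step~3 produces $m'$ with $(x')^{m'} = y'$, and unwinding exponents gives $x^{(tL_x+1)m'} = x^{m_0+bL_x}$ with both exponents at least $s_x$. The congruence criterion for semigroup powers (the lemma of Monico recalled above) then forces $(tL_x+1)m' \equiv m_0+bL_x \pmod{L_x}$, so the proposed output $m = m'(tL_x+1)-(b+c)L_x$ automatically satisfies $m \equiv m_0 \pmod{L_x}$. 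I would also record here that $x'$ generates $G_x$: a direct computation shows $(x')^k = e_{G_x}$ iff $k \equiv 0 \pmod{L_x}$, so BSGS in $\langle x'\rangle = G_x$ is legitimate.

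To upgrade the congruence $m \equiv m_0 \pmod{L_x}$ to the equality $x^m = y$, I would split into two cases. If $m_0 \geq s_x$ then $y \in G_x$ and Step~2 returns $b=0$; the maximality of $c$ in Step~4, combined with Lemma~\ref{proof_gap}, places $m = (tL_x+1)m' - cL_x$ in $[s_x, s_x+L_x)$. With $m, m_0 \geq s_x$ and $m \equiv m_0 \pmod{L_x}$, the congruence criterion yields $x^m = x^{m_0} = y$. If instead $m_0 < s_x$ then $b \geq 1$; the minimality of $b$ forces $m_0 + bL_x \in [s_x, s_x+L_x)$, while the maximality of $c$ forces $(tL_x+1)m' - cL_x$ into the same length-$L_x$ interval. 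Two elements of an interval of length $L_x$ that agree modulo $L_x$ must coincide, so $(tL_x+1)m' - cL_x = m_0 + bL_x$, which yields $m = m_0$ outright. This canonical-representative coincidence, leaning on Lemma~\ref{proof_gap}, is the main technical obstacle.

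For complexity, I would first observe that a one-time pre-computation of the idempotent $e_{G_x} = x^{tL_x}$ (used throughout for the $G_x$-membership test $z \cdot e_{G_x} = z$) together with $x' = x^{tL_x+1}$ costs $\mathcal{O}(\log N_x)$ multiplications in Step~1. Each of Steps~2 and~4 is a binary search of $\mathcal{O}(\log N_x)$ iterations over an interval of length $\mathcal{O}(N_x)$; each iteration needs $\mathcal{O}(\log N_x)$ multiplications to assemble the appropriate power $x^{bL_x}$ or $x^{cL_x}$ and $\mathcal{O}(1)$ multiplications to carry out the membership test, contributing $\mathcal{O}((\log N_x)^2)$ per step. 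Step~3 is Baby-Step Giant-Step in the cyclic group $G_x$ of known order $L_x$, implemented in the inverse-free variant used by Monico (matching $y' \cdot (x')^{jm}$ against baby-step powers $(x')^k$), which costs $\mathcal{O}(\sqrt{L_x})$ multiplications and $\mathcal{O}(\sqrt{L_x})$ storage. Step~5 is integer arithmetic of bit-size $\mathcal{O}(\log N_x)$ and is absorbed into the preceding costs. Summing yields the claimed bounds $\mathcal{O}(\sqrt{L_x}+(\log N_x)^2)$ semigroup multiplications and $\mathcal{O}(\sqrt{L_x})$ semigroup-element storage.
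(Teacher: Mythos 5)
Your proof is correct and reaches the same conclusion by a slightly different route at the one genuinely delicate point. The paper proves exactness by applying Lemma~\ref{proof_gap} twice to the single collision $x^{m'(tL_x+1)-cL_x}=x^{bL_x+n}$, obtaining the two opposite inequalities and hence equality of exponents in one stroke, for both cases at once. You instead first extract the congruence $m\equiv m_0 \pmod{L_x}$ from Monico's criterion and then use the minimality of $b$ and maximality of $c$ to pin the relevant exponents into the canonical window $[s_x,\,s_x+L_x)$, where congruence forces equality. The two arguments are close cousins (Lemma~\ref{proof_gap} is essentially the canonical-representative fact in disguise), but your version buys something real: in the case $y\in G_x$ (so $b=0$) the exponent $m_0$ of an arbitrary representation $y=x^{m_0}$ need not satisfy $m_0\le m'(tL_x+1)-cL_x$, so the first inequality of Lemma~\ref{proof_gap} is not available in the form the paper uses it; your decision to claim only $x^m=y$ there, reserving the integer equality $m=m_0$ for the case $m_0<s_x$ where the logarithm is genuinely unique, is the more careful statement of what the algorithm actually guarantees. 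Your complexity accounting (precomputing the idempotent for the membership test, $\mathcal{O}((\log N_x)^2)$ for each binary search, inverse-free BSGS of cost $\mathcal{O}(\sqrt{L_x})$) matches the paper's. Two small remarks: the appeal to Lemma~\ref{proof_gap} in your first case is unnecessary, since the maximality of $c$ alone places $m$ in the window; and, like the paper, you leave untreated the degenerate case $m'=0$, where $(x')^{m'}$ should be read as the identity $x^{tL_x}$ (equivalently one may take $m'=L_x$) so that all exponents stay positive.
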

\begin{proof}
We use the notations of Algorithm~\ref{dlp_algo_main}, and also write $n=\log_x y$. We will show that the output $m$ is equal to the correct discrete logarithm value $n$.
 Recall that we have a group $G_x$, generated by ${x}^\prime:=x^{tL_x+1}$, and with identity $x^{tL_x}$. The parameter $t$ is given by the formula $t=\left\lceil\frac{s_x}{L_x}\right\rceil$. Inverses in $G_x$ can be computed in polynomial time using the formula \eqref{inverseformula}. Note that membership in $G_x$ can be tested with one equality check: $y\in G_x \iff y\cdot x^{L_x} =y$.
 There are now two cases:
 \begin{enumerate}
     \item 
When $y\in G_x$, we have $b=0$. Here, it is possible to use Shank's Baby Step-Giant Step algorithm \cite{shanks1971class} which is a deterministic algorithm and which requires  $\mathcal{O} \left(\sqrt{L_x} \right)$ semigroup multiplications and storage space $\mathcal{O} \left(\sqrt{L_x} \right)$, in order to compute $\log_{{x}^\prime}(y)$. This is done in Step (3). From this value, $n=\log_x(y)$ is readily computed, as shown below. Note that in this case, $\log_x(y)$ is determined modulo $L_x$. 
   \item When $y\not\in G_x$, Algorithm~\ref{dlp_algo_main} first computes, using binary search, the smallest power $b$ of $x^{L_x}$ such that the product $y\cdot x^{bL_x}$ lies in the group $G_x$, and then proceeds as in case~1 via the Baby Step-Giant Step algorithm to find the discrete logarithm $m^{\prime}$ of $y\cdot x^{bL_x}$ with base $x^\prime$ (i.e. ${(x^\prime)}^{m^\prime} = y\cdot x^{bL_x}$). Note that in this case, the value of $\log_x(y)$ is less than $s_x$, and is thus determined uniquely in $\N$. Again, the time and space complexity are both $\mathcal{O} \left(\sqrt{L_x} \right)$.
\end{enumerate}

In both cases above, we have the maximal value $c$ such that $x^{m^{\prime}(tL_x+1)-cL_x} \in G_x$, and so $c \leq L_x + s_x+1 = N_x+1$, since $m^\prime \leq L_x$ and $tL_x \leq L_x+ s_x$. We also clearly have $b \leq t \leq N_x$. Since the computations of both $b$ and $c$ are done via binary searches, they contribute $\mathcal{O}((\log N_x)^2)$ steps to the overall time complexity. Now, \[x^{m^{\prime}(tL_x+1)-cL_x} = x^{m^{\prime}(tL_x+1)} = (x^\prime)^{m^\prime}  = x^{bL_x+ n}.\]
Applying Lemma \ref{proof_gap} to the above equation, we must have \[m^{\prime}(tL_x+1)-cL_x \leq bL_x+n, \ \text{and} \ bL_x+n \leq m^{\prime}(tL_x+1)-cL_x.\]
Therefore, \ $bL_x+n = m^{\prime}(tL_x+1)-cL_x$, or \ $n = m^{\prime}(tL_x+1) - (b+c)L_x,$ which is precisely equal to $m$, the value returned by the Algorithm~\ref{dlp_algo_main}. Thus, $m=n$. This completes the proof.
\end{proof}

Combining Theorem~\ref{cycle_length}, Lemma~\ref{cycle_start}  and Theorem~\ref{algo_correctness}
we arrive at the main proposition of the paper:

\begin{proposition}\label{main_thm}
Let $S$ be a semigroup, $x\in S$ a torsion element and 
 $y\in \langle x \rangle $  any element.
The discrete logarithm $m=\log_x(y)$ can be computed deterministically in 
\[\mathcal{O}\left( \sqrt{N_x}\cdot  (\log N_x)^2\   \right)  \]
steps, with a required storage of $\mathcal{O}\left( \sqrt{N_x} \right) $
semigroup elements.
\end{proposition}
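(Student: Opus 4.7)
The plan is to obtain Proposition~\ref{main_thm} by sequential composition of the three ingredients already established in the paper: Algorithm~\ref{det_algo} (the deterministic cycle length routine), Algorithm~\ref{cycle_start_algo} (the binary search for the cycle start), and Algorithm~\ref{dlp_algo_main} (the reduction to a group DLP via Shanks in $G_x$). Given the torsion base $x$ and the target $y\in\langle x\rangle$, I would run these three in order and then show that both the time and the space bounds add up to the claimed figure.

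First, I would invoke Theorem~\ref{cycle_length} to produce $L_x$ deterministically. This step dominates the budget: it costs $\mathcal{O}(\sqrt{N_x}\,(\log N_x)^2)$ semigroup operations and needs $\mathcal{O}(\sqrt{N_x})$ storage for the baby-step table. Crucially, this step does \emph{not} require an a priori bound on $N_x$, since the doubling of $N$ inside Algorithm~\ref{det_algo} yields the $(\log N_x)^2$ factor automatically. Next, I would feed $L_x$ into Algorithm~\ref{cycle_start_algo}; by Lemma~\ref{cycle_start} this returns $s_x$ in $\mathcal{O}((\log N_x)^2)$ semigroup multiplications with only $\mathcal{O}(1)$ storage of semigroup elements. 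At this point both structural parameters of $\langle x\rangle$ are known deterministically.

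Finally, with $L_x$ and $s_x$ in hand, I would apply Algorithm~\ref{dlp_algo_main} to $y$. By Theorem~\ref{algo_correctness}, this outputs the correct $m=\log_x(y)$ using $\mathcal{O}(\sqrt{L_x}+(\log N_x)^2)$ multiplications and $\mathcal{O}(\sqrt{L_x})$ storage. Since $L_x \leq N_x$, both quantities are absorbed into the bounds already incurred in the first step.

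Summing the three contributions gives total time
\[
\mathcal{O}\!\left(\sqrt{N_x}(\log N_x)^2\right) + \mathcal{O}\!\left((\log N_x)^2\right) + \mathcal{O}\!\left(\sqrt{L_x}+(\log N_x)^2\right)
\;=\;\mathcal{O}\!\left(\sqrt{N_x}(\log N_x)^2\right),
\]
and total storage $\mathcal{O}(\sqrt{N_x}) + \mathcal{O}(1) + \mathcal{O}(\sqrt{L_x}) = \mathcal{O}(\sqrt{N_x})$. Determinism is preserved at each stage because every subroutine invoked is itself deterministic. There is no real obstacle here: the work is entirely book-keeping. The only point one should articulate carefully is that the first stage sets the asymptotic rate, so the logarithmic overhead of binary search in stage two and the $\sqrt{L_x}$ Baby-Step Giant-Step cost in stage three are both strictly dominated; after noting this, the proposition follows immediately.
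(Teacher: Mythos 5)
Your proposal matches the paper's own proof: both compose Theorem~\ref{cycle_length} (Algorithm~\ref{det_algo} for $L_x$), Lemma~\ref{cycle_start} (Algorithm~\ref{cycle_start_algo} for $s_x$), and Theorem~\ref{algo_correctness} (Algorithm~\ref{dlp_algo_main} for $m$), and observe that the first stage dominates since $L_x\leq N_x$. The argument is correct and essentially identical to the one in the paper.
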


\begin{proof}
For the solution, one begins by finding $L_x$. This can be done using Algorithm \ref{det_algo} 
and according to Theorem~\ref{cycle_length} this requires
$\mathcal{O}\left( \sqrt{N_x} \cdot (\log N_x)^2 \right)$ steps and the storage of $\mathcal{O}\left( \sqrt{N_x} \right)$ elements.

By Lemma~\ref{cycle_start} the computation of the cycle start $s_x$ is achieved in 
$\mathcal{O}((\log N_x)^2)$ semigroup multiplications, which does not contribute to the overall cost 
of the algorithm.

By Theorem~\ref{algo_correctness}, the discrete logarithm $m$ can then be retrieved using Algorithm \ref{dlp_algo_main}, in $\mathcal{O}\left((\log N_x)^2 + \sqrt{L_x} \right)$ steps, with a required storage of $\mathcal{O}\left( \sqrt{L_x} \right)$ semigroup elements.

As $L_x\leq N_x$, the overall complexity is dominated by the computation of the cycle length, and the proof of the result is now clear.
\end{proof}

\subsection{Solving the DLP once the Factorization of the Cycle Length is known}\label{pohlig-sub}

We mentioned in the introduction that for a general group of order $N$ the best general known
algorithms  for solving the discrete logarithm problem have complexity  $\mathcal{O}(\sqrt{N})$ operations. 

In case the order  $N$  has a prime factorization into small primes there is the 
famous  Pohlig--Hellman algorithm \cite{pohlig1978improved} for solving the 
DLP whose complexity is dominated by the largest prime factor in the integer factorization 
of  $N$.

In case that we have available the integer factorization of the cycle length $L_x$
we can adapt the Pohlig--Hellman algorithm for groups to a Pohlig--Hellman algorithm
for solving the DLP in a semigroup.  Algorithm~\ref{dlp_algo2} represents this adapted 
Pohlig--Hellman algorithm.

\begin{algorithm}[H]
\hspace*{\algorithmicindent} \textbf{Input} {A semigroup $S$, a torsion element $x \in S $, with cycle length $L_x = \prod_{i=1}^r p_i^{e_i}$ and cycle start $s_x$, and $y\in S$ with $y=x^m$} \\
 \hspace*{\algorithmicindent} \textbf{Output} {The discrete logarithm $m$ of $y$ with base $x$}
\begin{algorithmic}[1]
\STATE Compute $t= \left\lceil\frac{s_x}{L_x}\right\rceil$ and define $x^{\prime} = x^{tL_x+1} \in G_x$. \\
\STATE Find the minimum number $0 \leq b\leq t$ such that $y^{\prime}= y\cdot x^{bL_x} \in G_x$ using binary search.  \\
\STATE \textbf{for $i\in\{1, \ldots, r\}$}
{\begin{enumerate}
\item Compute the values $x_i^{\prime} = (x^{\prime})^{L_x/p_i^{e_i}}$, \ $y_i^{\prime} = (y^{\prime})^{L_x/p_i^{e_i}}$, and \ $\gamma_i := (x_i^{\prime})^{p^{e_i-1}}$. 
\item Calculate the inverse $z_i$ of ${x_i}^\prime$ in $G_x$ using \eqref{inverseformula}. \\
\item Set $k\leftarrow 0$ and $n_0\leftarrow 0 $. \\
\item \textbf{while $k < e_i$ do} {\begin{enumerate}
     \item Compute \ ${y_k^{\prime}} = (y_i^{\prime}z_i^{n_k})^{p^{e_i-1-k}} \in \langle \gamma_i \rangle$. \\
    \item Use Shank's Baby-Step Giant-Step algorithm for the group $\langle \gamma_i \rangle \subseteq G_x$ to compute $d_k \in \{0, 1, \ldots, p_i-1\}$ \\
    such that ${\gamma_i}^{d_k} = {y_k}^{\prime}$. \\
    \item Set $n_{k+1}\leftarrow n_k+p_i^kd_k$, and $k \leftarrow k+1$. \\
     \end{enumerate}} 
     \item \textbf{end while}
\item Set $m_i:=n_{e_i}$. \\
\end{enumerate}}
\STATE \textbf{end for}
\STATE Use the Chinese Remainder Theorem to solve the congruence equations \[{\displaystyle m^{\prime}\equiv m_{i}{\pmod {p_{i}^{e_{i}}},}\quad \forall \ i \in \{1,\dots ,r\}}\] uniquely for $m^{\prime} \mod L_x$. This gives the discrete logarithm of $y^{\prime}$ with respect to the base $x^{\prime}$ in the group $G_x$. \\
\STATE  Find the maximum number $c\geq 0$ such that $x^{(tL_x+1)m^{\prime}-cL_x} \in G_x$ using binary search. \\
    \STATE  Return $m = m^{\prime}(tL_x+1) - (b+c)L_x$.
\caption{Pohlig--Hellman Algorithm for solving the Discrete Logarithm Problem in a Semigroup}\label{dlp_algo2}
\end{algorithmic}
\end{algorithm}

\begin{theorem}\label{algo_final}
Let $S$ be a semigroup, $x\in S$ a torsion element and 
 $y\in \langle x \rangle $  any element. Assume the cycle start $s_x$ of $x$ is known 
 and assume the integer factorization of the cycle length $L_x$ is known to be $L_x = \prod_{i=1}^r p_i^{e_i}$. Then  Algorithm~\ref{dlp_algo2} computes the discrete logarithm $\log_x y$ requiring
$\mathcal{O}\left(  \sum\limits_{i=1}^{r} e_i \left(\log L_x + \sqrt{p_i}\right) +\left(\log N_x \right)^2 \right)$ steps. The space complexity of the algorithm
consists in $\mathcal{O}\left(\sum\limits_{i=1}^{r} e_i \sqrt{p_i}\right)$ semigroup elements.
\end{theorem}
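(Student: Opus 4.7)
The plan is to decompose Algorithm~\ref{dlp_algo2} into three essentially independent pieces: (i) the semigroup-to-group reduction performed by Steps 1, 2, 6, and 7, (ii) the Pohlig--Hellman recursion of Steps 3--5 that takes place entirely inside the cyclic group $G_x$, and (iii) the CRT assembly in Step 5. For (i), correctness and the complexity contribution $\mathcal{O}((\log N_x)^2)$ are inherited verbatim from Theorem~\ref{algo_correctness}, since these four steps are identical to the corresponding steps of Algorithm~\ref{dlp_algo_main}. The real work will therefore be in (ii).

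For correctness of (ii), I would observe that once $x^{\prime} \in G_x$ has been formed and $y^{\prime} = y\cdot x^{bL_x}$ has been placed inside $G_x$, the task reduces to a group DLP in a cyclic group of order $L_x$. By Lemma~\ref{group_within}, $G_x$ is a genuine cyclic group with identity $x^{tL_x}$, and by formula~\eqref{inverseformula} every inverse required inside $G_x$ is effectively computable. Consequently, Steps 3--5 are the textbook Pohlig--Hellman algorithm applied to $G_x$: for each prime power $p_i^{e_i}$ one restricts to the order-$p_i$ subgroup $\langle \gamma_i \rangle$, extracts the base-$p_i$ digits $d_0,\ldots,d_{e_i-1}$ of $m_i := m^{\prime} \bmod p_i^{e_i}$ by $e_i$ successive BSGS calls, and then reassembles $m^{\prime} \bmod L_x$ via CRT. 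That the output $m = m^{\prime}(tL_x+1) - (b+c)L_x$ of Step 7 then equals $\log_x y$ is the exact content of Theorem~\ref{algo_correctness}.

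For the complexity, I would account for costs piece by piece. Step 1 and the setup lines 3(a)--(b) each cost $\mathcal{O}(\log L_x)$ semigroup multiplications, with the inverse $z_i$ obtained via \eqref{inverseformula} at cost $\mathcal{O}(\log N_x)$. For a fixed $i$, the inner \textbf{while} loop runs $e_i$ times; within each iteration the dominant costs are one exponentiation of exponent at most $L_x$ to form $y_k^{\prime}$ (cost $\mathcal{O}(\log L_x)$) plus one Baby-Step Giant-Step computation inside $\langle \gamma_i \rangle$ (cost $\mathcal{O}(\sqrt{p_i})$ multiplications and $\mathcal{O}(\sqrt{p_i})$ stored elements). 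Summing over $i$ yields $\mathcal{O}\bigl(\sum_{i=1}^{r} e_i(\log L_x + \sqrt{p_i})\bigr)$ for the Pohlig--Hellman phase; adding the $\mathcal{O}((\log N_x)^2)$ contributed by the two binary searches in Steps 2 and 6 gives the stated time bound. The storage bound $\mathcal{O}\bigl(\sum_{i=1}^{r} e_i \sqrt{p_i}\bigr)$ follows by summing the BSGS table sizes across all $\sum_i e_i$ invocations.

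The step I expect to require the most care is the bookkeeping around inversion: the classical proof of Pohlig--Hellman writes $y^{\prime}\cdot (x^{\prime})^{-n_k}$ without comment, but in our setting each inverse has to be realized explicitly via \eqref{inverseformula}, which presupposes knowledge of $s_x$ and $L_x$. I would check that a single precomputed inverse $z_i$ of $x_i^{\prime}$ suffices, so that $z_i^{n_k}$ is obtained by positive exponentiation inside $G_x$, and that every intermediate quantity produced in Steps 3(d)(i)--(iii) remains in $G_x$---a fact that follows directly from Lemma~\ref{group_within}. A minor secondary verification is that $y_k^{\prime} \in \langle \gamma_i \rangle$, which is immediate from $y^{\prime} \in \langle x^{\prime} \rangle$ together with the defining relation $\gamma_i = (x_i^{\prime})^{p_i^{e_i-1}}$.
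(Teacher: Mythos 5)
Your proposal is correct and follows essentially the same route as the paper's proof: treat Steps 1, 2, 6, 7 as the semigroup-to-group reduction already justified by Theorem~\ref{algo_correctness} (contributing the $\mathcal{O}((\log N_x)^2)$ term for the binary searches), and treat Steps 3--5 as the standard group Pohlig--Hellman inside $G_x$, whose per-prime cost $e_i(\log L_x+\sqrt{p_i})$ and BSGS table sizes give the stated time and space bounds. Your discussion of realizing inverses via the precomputed $z_i$ from \eqref{inverseformula} is a useful elaboration of a point the paper leaves implicit, but it does not change the argument.
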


\begin{proof}
Steps 1 and 2 are in analogy to the corresponding steps of Algorithm~\ref{dlp_algo_main}.
Steps 3 to 5  represent the Pohlig--Hellman algorithm for groups with the implied 
complexity dominated by the largest prime factor $p_i$ of the integer factorization of $L_x$ (for a reference on Pohlig--Hellman in groups, see  in \cite[Theorem 2.32]{hoffstein2008introduction}).
It follows that the running time of the algorithm is $\mathcal{O}\left(  \sum\limits_{i=1}^{r} e_i \left(\log L_x + \sqrt{p_i}\right) \right)$ steps. The computation of $b$ and $c$
require in addition $\left(\log N_x \right)^2 $ steps.
The total space complexity is $\mathcal{O}\left(\sum\limits_{i=1}^{r} e_i \sqrt{p_i}\right)$ semigroup elements and that completes the proof.
\end{proof}

\section{Conclusion} The DLP in a finite group has noteworthy significance for cryptography, and so an extension of existing solutions to other algebraic structures like semigroups, where inverses may not be available, is of natural interest. In particular, the DLP in a semigroup has been discussed before in two places, namely \cite{banin} and \cite{monico2002}. Both these authors provide probabilistic generalizations of existing collision-based methods for the case of a semigroup. The time complexity of the algorithm in \cite{banin} is $\mathcal{O}(\sqrt{N_x}(\log N_x)^2\log\log N_x)$, and the one in \cite{monico2002} is $\mathcal{O}(\sqrt{N_x}(\log N_x)^2)$.  Both these methods rely on computing a multiple of the cycle length and then taking gcd's or factors, and could fail with a small probability that depends on the parameters chosen. In this paper, we provided a deterministic solution of the semigroup DLP, which computes the cycle length directly and does not rely on finding a factor of it. The time complexity of our algorithm is $\mathcal{O}(\sqrt{N_x}(\log N_x)^2)$. We further demonstrated the application of the Pohlig-Hellman algorithm to semigroups. A direct consequence of our findings is that for cryptographic purposes, generalizing the type of algebraic structure for the DLP offers no additional advantage, at least in the torsion case, both in a classical and a quantum setting.

\section{Acknowledgement}

This research is supported by armasuisse Science and Technology. The second author is also supported by Swiss National Science Foundation grant n. 188430.

\bibliographystyle{plain}
\bibliography{huge1}

\end{document}